\DeclareMathOperator*{\argmin}{arg\,min}
\DeclareMathOperator{\proj}{\mathbf{Proj}}
\newcommand{\transpose}{{\top}}
\newcommand{\x}{{\bm{x}}}
\newcommand{\y}{{\bm{y}}}
\newcommand{\z}{{\bm{z}}}
\newcommand{\uu}{{\bm{u}}}
\newcommand{\mbx}{\mathbf{x}}
\newcommand{\mbz}{\mathbf{z}}
\newcommand{\cA}{\mathcal{A}}
\newcommand{\cB}{\mathcal{B}}
\newcommand{\cD}{\mathcal{D}}
\newcommand{\cH}{\mathcal{H}}
\newcommand{\cP}{\mathcal{P}}
\newcommand{\cR}{\mathcal{R}}
\newcommand{\cT}{\mathcal{T}}
\newcommand{\cU}{\mathcal{U}}
\newcommand{\bR}{\mathbb{R}}
\newcommand{\dist}{{\mathrm{dist}}}
\DeclareSymbolFont{symbolsC}{U}{pxsyc}{m}{n}
\newtheorem{remark}{Remark}
\newtheorem{proposition}{Proposition}
\newtheorem{definition}{Definition}
\newtheorem{fact}{Fact}
\def\BibTeX{{\rm B\kern-.05em{\sc i\kern-.025em b}\kern-.08em
    T\kern-.1667em\lower.7ex\hbox{E}\kern-.125emX}}
\title{\textbf{Optimal Strategies for Guarding a Compact and Convex Target Set:\\ A Differential Game Approach}}
\author{Yoonjae Lee \and Efstathios Bakolas \thanks{Y. Lee (PhD student) and E. Bakolas (Associate Professor) are with the Department of Aerospace Engineering
and Engineering Mechanics, The University of Texas at Austin,
Austin, Texas 78712-1221, USA, Emails: yol033@utexas.edu; bakolas@austin.utexas.edu}}
\begin{document}

\maketitle

\begin{abstract}
We revisit the two-player planar target-defense game initially posed by Isaacs where a pursuer (or defender) attempts to guard a target set from an attack by an evader (or attacker). This paper builds on existing analytical solutions to games of defending a simple shape of target area to develop a generalized and extended solution to the same game with a compact convex target set with smooth boundary. Isaacs' method is applied to address the game of kind and games of degree. A geometric solution approach is used to find the barrier surface that demarcates the winning sets of the players. A value function coupled with a set of optimal state feedback strategies in each winning set is derived and proven to correspond to the saddle point solution of the game. The proposed solutions are illustrated by means of numerical simulations.
\end{abstract}

\section{Introduction}

Problems of defending a target set from attacks by enemies have received significant attention due to their potential applications in aerospace and robotics. Problems of this type, which involve adversarial agents, are typically formulated as differential games, as was firstly done by Isaacs in \cite{isaacs1965differential}. Therein, Isaacs poses a target-defense game between a pursuer (or defender) and an evader (or attacker) which subsequently decomposes into two subproblems: a game of kind and a game of degree. The former game deals with the question of controllability (e.g., can the capture/attack ever occur?), whereas the latter game deals with the computation of the optimal strategies of the players (e.g., what is the optimal action of the pursuer/evader to successfully defend/attack the target set?). If the players have simple motion, one can solve the game of kind by using geometric methods, such as Voronoi diagrams and Apollonius circles \cite{bakolas2010optimal,bakolas2012relay,makkapati2018optimal,makkapati2019optimal}. For the game of degree, the value of the game as well as the saddle point strategies of the players must be derived from (or shown to satisfy) the Hamilton-Jacobi-Isaacs (HJI) partial differential equation.

There have been extensive work on target-defense games (also referred to as reach-avoid games) combining geometric analysis and Isaacs' method. In \cite{yan2017defense}, Yan et al. identify a barrier curve that demarcates the winning regions of the defender and attacker in a game of defending a circular perimeter. This work is revisited and enhanced by Garcia et al. using Isaacs' method in \cite{garcia2019optimal,garcia2020pride}. In \cite{yan2018reach,yan2019task}, Yan et al. extend their method of constructing barriers to a multiplayer border-defense game that takes place in a compact convex game region. In \cite{garcia2019cooperative,von2020multiple,garcia2020multiple}, the authors apply Isaacs' method to multiplayer border-defense games and address the presence of dispersal surfaces wherein the players may have not have an unique optimal strategy. A multiplayer reach-avoid game under information asymmetry is studied in \cite{selvakumar2019feedback}.

All of the aforementioned references, however, only discuss the particular cases where the shape of the target set is simple (e.g., a point, line, or circle). In \cite{shishika2018local} and \cite{shishika2020cooperative}, Shishika et al. study multiplayer perimeter-defense games with any shape of compact convex target set, but the defenders' motion is constrained along the target boundary. Similar problems with motion constraints have also been studied in \cite{shishika2019team,von2020guarding,lee2020perimeter}. In \cite{pachter2017differential}, the authors study games of defending a compact convex target set, providing a geometric solution approach to the game of kind for the same speed player case. However, a barrier surface is only identified for a circular target case, and the game of degree is only solved for a singleton case.

The main contribution of this paper is to provide a complete solution to both the game of kind and the game(s) of degree of the planar Two-Player Target-Defense Game (TPTDG) with any shape of compact convex target set with smooth boundary. In particular, we generalize and extend the solution framework and results in \cite{garcia2019optimal} specifically proposed for the circular target case. The barrier function we propose takes a generalized form of (11) in \cite{garcia2019optimal} such that it can be applied to any target-defense game (as long as the target set is compact and convex and has a smooth boundary). Furthermore, our proposed state feedback strategies for the pursuer and evader are also in the generalized form of their counterparts in \cite{garcia2019optimal} but are adapted to the problem formulation of this paper. These generalized feedback control laws are nevertheless shown to correspond to the saddle point solution of the game.

The rest of the paper is structured as follows. In Section~\ref{sec:problem_formulation}, a two-player target-defense game with a compact convex target set with smooth boundary is formulated. In Section~\ref{sec:game_of_kind}, a geometric solution approach is introduced and used to solve the game of kind. In Section~\ref{sec:games_of_degree}, the optimal strategies of the two players in the capture and attack games are proposed and proven to be the saddle point solutions. Section~\ref{sec:simulations} presents numerical simulations are finally, Section~\ref{sec:concl} summarizes the results of the paper and introduces directions for follow-up research.

\section{Problem Formulation} \label{sec:problem_formulation}

A planar Two-Player Target-Defense Game (TPTDG) with a compact convex target set with smooth boundary is considered. The domain of the game region is taken to be the Euclidean 2-D plane, $\bR^2$, whereas the target set is a compact convex set $\Omega \subset \bR^2$, which is defined as
\begin{equation}
    \Omega := \{ \z = (x,y) \in \bR^2 : h(\z) \leq 0 \},
\end{equation}
where $h : \bR^2 \rightarrow \bR$ is a convex and smooth function. The boundary of $\Omega$, along which $h(\z) = 0$ for $\z \in \bR^2$, is denoted by $\partial \Omega$. The goal of the evader is to enter $\Omega$ (or reach $\partial \Omega$), whereas the pursuer's goal is to capture the evader before the evader enters the set. Both players are assumed to have complete information about each other's state and dynamics at every time instance. The set $\Omega$ is assumed to be permeable, which means that the pursuer can freely move in and out of the set if it is on his way to capture the evader. The players are assumed to have simple motion, that is,
\begin{align}
    \dot{\x}_P  &= v_P \uu_P,  &\x_P(0) & =\x_{P,0}, \label{eq:motion1}
    \\
    \dot{\x}_E  &= v_E \uu_E,  &\x_E(0) & =\x_{E,0}, \label{eq:motion2}
\end{align}
where $\x_P = (x_P,y_P) \in \bR^2$ and $\x_E = (x_E,y_E) \in \bR^2$ (resp., $\x_{P,0} = (x_{P,0},y_{P,0}) \in \bR^2$ and $\x_{E,0} = (x_{E,0},y_{E,0}) \in \bR^2 \backslash \Omega$) denote the positions of the pursuer and the evader at time $t$ (resp., at time $t=0$), respectively. Furthermore, $v_P \in \bR_{\geq 0}$ (resp., $v_E \in \bR_{\geq 0}$) denotes the maximum allowable speed of the pursuer (resp., the evader) and $\uu_P = (u_{P,x},u_{P,y}) \in \cU$ and $\uu_E = (u_{E,x},u_{E,y}) \in \cU$ denote the control input of the pursuer and the evader, respectively, where $\cU = \{ \uu \in \bR^2 : \| \uu \| = 1 \}$ is the common input set ($\|\cdot\|$ denotes the standard Euclidean vector norm). Let us define the speed ratio $\gamma := v_E/v_P$, then only the slower evader case (i.e., $0 < \gamma < 1$)
is considered in this paper. Let us denote the joint state of the players (or game state) by $\mbx = (\x_P, \x_E) \in \mathbb{R}^4$, then the corresponding game dynamics can be written as
\begin{equation}\label{eq:statedyns}
\dot{\mbx} = \bm{f}(\mbx,\uu_P,\uu_E),~~~~\mbx(0)=\mbx_0,
\end{equation}
where $\bm{f}: \bR^4 \times \cU \times \cU \rightarrow \bR^4$ is the vector field of the game dynamics, i.e., $\bm{f}(\mbx,\uu_P,\uu_E) = (v_P \uu_P, v_E \uu_E)$, and $\mbx_0 = (\x_{P,0}, \x_{E,0})$ is the initial state of the game.

The TPTDG has two terminal conditions: one for capture and another one for attack. Capture occurs as the game state $\mbx$ reaches the capture manifold $\cT_c := \{ \mbx = (\x_P, \x_E) \in \bR^4: (\x_P = \x_E) \wedge (\x_E \notin \Omega) \}$, whereas attack occurs as the same state reaches the attack manifold $\cT_a := \{ \mbx = (\x_P, \x_E) \in \bR^4 : \x_E \in \Omega \}$. Let $t_f := \min(t_c,t_a)$, where $t_c = \inf \{ t \geq 0 : \mbx(t) \in \cT_c \}$ and $t_a = \inf \{ t \geq 0 : \mbx(t) \in \cT_a \}$, then we say that the pursuer (resp., the evader) has won the TPTDG if $\mbx(t_f) \in \cT_c$ (resp., if $\mbx(t_f) \in \cT_a$). As will be discussed in the following section, if both players adhere to their optimal strategies (referred to as optimal play), the result of the TPTDG can be predicted. If capture is expected (referred to as capture game), the objective of the pursuer (resp., the evader) is to maximize (resp., minimize) the minimum distance between the point of capture and the set $\Omega$. The payoff functional corresponding to capture game is given by
\begin{equation}\label{eq:payofffun1}
    J_c \left( \mbx_0,\uu_P(\cdot),\uu_E(\cdot) \right) = \dist \left( \x_E(t_c),\Omega \right),
\end{equation}
where $\dist: \mathbb{R}^2 \times \cP(\Omega)  \rightarrow \bR_{\geq 0}$, $\mathrm{dist}(\x, S)= \inf_{y \in S}\|\x - \y\|$
is the distance function that measures the closeness of a point $\x \in \bR^2$ from a set $S \subseteq \Omega$, or $S \in \cP(\Omega)$ (the symbol $\cP(\Omega)$ denotes the powerset of $\Omega$). Since the set $\Omega$ is assumed to be compact and convex, there always exists a unique point $\z\in S$ such that $\mathrm{dist}(\x, S)= \inf_{\y \in S}\|\x - \y\| = \min_{\y \in S}\|\x - \y\| = \|\x- \z\|$, which implies that $\dist(\x, S)$ corresponds to the minimum distance between the given point $\x$ and the compact convex set $S$. The saddle point of this game is a pair of state feedback strategies $\uu_P^\star(\cdot)$ and $\uu_E^\star(\cdot)$ that satisfy
\begin{equation}\label{eq:payofffun1minmax}
    \min_{\uu_E^\star(\cdot)}\max_{\uu_P(\cdot)}J_c \leq \min_{\uu_E^\star(\cdot)}\max_{\uu_P^\star(\cdot)}J_c \leq \min_{\uu_{E}(\cdot)}\max_{\uu_P^\star(\cdot)}J_c,
\end{equation}
and the value funciton of the game is given by
\begin{equation} \label{eq:valfun1}
    V_{c}(\mbx_0) = \min_{\uu_E(\cdot)} \max_{\uu_P(\cdot)} J_c.
\end{equation}
When attack is expected (referred to as attack game), the objective of the pursuer (resp., the evader) is to minimize (resp., maximize) the distance between the pursuer's final position and the point of attack. For the attack game, the payoff functional is given by
\begin{equation} \label{eq:payofffun2}
    J_a(\mbx_0,\uu_P(\cdot),\uu_E(\cdot) ) = \| \x_E(t_a) - \x_P(t_a) \|,
\end{equation}
the saddle point satisfies
\begin{equation} \label{eq:payofffun2minmax}
    \min_{\uu_P^\star(\cdot)}\max_{\uu_E(\cdot)}J_a \leq \min_{\uu_P^\star(\cdot)}\max_{\uu_E^\star(\cdot)}J_a \leq \min_{\uu_P(\cdot)}\max_{\uu_E^\star(\cdot)}J_a,
\end{equation}
and the value function is given by
\begin{equation}\label{eq:valfun2}
    V_a(\mbx_0) = \min_{\uu_P(\cdot)} \max_{\uu_E(\cdot)} J_a.
\end{equation}

\section{The Game of Kind} \label{sec:game_of_kind}

In this section, we address the game of kind, that is, we determine the conditions under which the pursuer or the evader can win the TPTDG. To this end, the barrier surface that divides the state space of the game $\bR^4$ into two sets, $\cR_c$ (capture set) and $\cR_a$ (attack set), is identified such that the pursuer is guaranteed to win the game in $\cR_c$ (i.e., there exists an optimal strategy to successfully capture the evader), whereas the evader is guaranteed to win the game in $\cR_a$ (i.e., there exists an optimal strategy to successfully attack the target set). In \cite{isaacs1965differential}, Isaacs suggests a geometric method for the slower evader case, namely the Apollonius circle, which can be used to divide the game region $\bR^2$ into the dominant regions of the pursuer and the evader, respectively, given the current positions of the two players. This geometric solution approach is justified via Hamiltonian analysis as below.
\begin{proposition}\label{prop:hamiltonian}
    For players with kinematics defined in \eqref{eq:motion1} and \eqref{eq:motion2}, respectively, and payoff functional given by either \eqref{eq:payofffun1} or \eqref{eq:payofffun2}, the players' optimal control inputs are constant over time and optimal trajectories are straight lines.
\end{proposition}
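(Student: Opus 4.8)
The plan is to invoke Pontryagin's Minimum Principle (or equivalently the necessary conditions from Isaacs' theory of differential games) applied to the game dynamics \eqref{eq:statedyns}. Since both payoff functionals \eqref{eq:payofffun1} and \eqref{eq:payofffun2} are purely terminal (no running cost), the Hamiltonian is linear in the controls, and the key observation will be that the adjoint (costate) dynamics are trivial because the vector field $\bm{f}(\cX,\uu^\cP,\uu^\cE) = (v^\cP \uu^\cP, v^\cE \uu^\cE)$ does not depend on the state $\cX$ at all.

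First I would form the Hamiltonian $\cH = \bm{\lambda}\T \bm{f}(\cX,\uu^\cP,\uu^\cE) = v^\cP \bm{\lambda}^\cP{}\T \uu^\cP + v^\cE \bm{\lambda}^\cE{}\T \uu^\cE$, where $\bm{\lambda} = (\bm{\lambda}^\cP, \bm{\lambda}^\cE) \in \bR^4$ is the costate partitioned to match $\cX = (\x^\cP, \x^\cE)$. The costate evolves according to $\dot{\bm{\lambda}} = -\partial \cH / \partial \cX$; since $\bm{f}$ is independent of $\cX$, this gives $\dot{\bm{\lambda}} = \bm{0}$, so each costate component is constant over the entire play: $\bm{\lambda}^\cP(t) \equiv \bm{\lambda}^\cP$ and $\bm{\lambda}^\cE(t) \equiv \bm{\lambda}^\cE$. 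The terminal (transversality) conditions fix these constants from the gradient of the respective terminal payoff at the terminal manifold, but their precise values are not needed here—only their constancy matters.

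Next I would apply the saddle-point condition pointwise in time: the optimal $\uu^\cP$ minimizes (or maximizes, depending on the subgame) $v^\cP \bm{\lambda}^\cP{}\T \uu^\cP$ over $\uu^\cP \in \cU = \{\uu : \|\uu\| = 1\}$, and likewise for $\uu^\cE$. By Cauchy--Schwarz, the extremizer of a linear functional over the unit sphere is $\uu^{\cP\star} = \mp \bm{\lambda}^\cP / \|\bm{\lambda}^\cP\|$ (sign according to min/max), which is unique whenever $\bm{\lambda}^\cP \neq \bm{0}$, and symmetrically for $\uu^{\cE\star}$. Because the costates are constant, these optimal directions are constant in time; substituting the constant controls back into \eqref{eq:motion1}--\eqref{eq:motion2} yields $\x^\cP(t) = \x_0^\cP + v^\cP t\, \uu^{\cP\star}$ and $\x^\cE(t) = \x_0^\cE + v^\cE t\, \uu^{\cE\star}$, which are straight lines traversed at constant speed.

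\textbf{The main obstacle} I anticipate is handling the degenerate case in which a costate vanishes, $\bm{\lambda}^\cP = \bm{0}$ or $\bm{\lambda}^\cE = \bm{0}$, where the minimum principle fails to single out a direction; I would argue that along optimal play away from the terminal manifold the relevant costate is nonzero (it equals the gradient of the Euclidean distance or relative-distance payoff, which is a unit vector and hence never zero off the target boundary or off the collision set), so the extremizer is genuinely unique. A secondary point requiring care is the smoothness of the terminal payoffs so that the transversality conditions are well defined; here Theorem~\ref{theorem:projection} guarantees that the projection $P_\Omega \x^\cE$ is unique, which makes $\dist(\x^\cE, \Omega)$ differentiable with a well-defined gradient wherever $\x^\cE \notin \Omega$, and this supplies the needed regularity for the capture payoff \eqref{eq:payofffun1}.
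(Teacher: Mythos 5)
Your proposal follows essentially the same route as the paper's own proof: form the control-affine Hamiltonian, note that the state-independent dynamics make the costate equations trivial so the costates are constant, and then apply the saddle-point condition with Cauchy--Schwarz to obtain constant unit-vector controls, hence straight-line trajectories. The only difference is that you explicitly address the degenerate case of a vanishing costate via transversality and the regularity of the distance payoff, a point the paper glosses over by simply asserting the costates are constant non-zero vectors.
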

\begin{proof}
    The following analysis is based on the capture game (the analysis for the attack game can be done mutatis mutandis). Let us consider the Hamiltonian function $\cH: \bR^2 \times \bR^2 \times \mathcal{U} \times \mathcal{U}$ with
    \begin{align}
        \cH(\bm{\lambda}_P, \bm{\lambda}_E, \uu_P, \uu_E) = v_P \bm{\lambda}_P^\transpose \uu_P + v_E \bm{\lambda}_E^\transpose \uu_E,
    \end{align}
    where $\bm{\lambda}_P \in \bR^2$ and $\bm{\lambda}_E \in \bR^2$ are the co-state vectors. Let $\uu_P^*(t)$ and $\uu_E^*(t)$ be the optimal inputs (in open-loop form), then the corresponding optimal costates satisfy the canonical equations: $\dot{\bm{\lambda}}_P^*(t) \equiv 0$ and $\dot{\bm{\lambda}}_E^*(t) \equiv 0$, where the fact that the Hamiltonian does not depend on the states has been used. This implies $\bm{\lambda}_P^*(t) \equiv \bar{\bm{\lambda}}_P$ and $\bm{\lambda}_E^*(t) \equiv \bar{\bm{\lambda}}_E$, where $\bar{\bm{\lambda}}_P$ and $\bar{\bm{\lambda}}_E$ are constant non-zero vectors in $\bR^2$. Additionally,
    \begin{align}\label{eq:PMP}
        \cH(\bar{\bm{\lambda}}_P, \bar{\bm{\lambda}}_E&, \uu_P^*(t), \uu_E^*(t))  = \min_{\nu_E \in \cU}\max_{\nu_P\in\cU} \cH(\bar{\bm{\lambda}}_P, \bar{\bm{\lambda}}_E, \nu_P, \nu_E) \nonumber \\
        &\qquad\qquad\qquad~ = \max_{\nu_P \in \cU} \min_{\nu_E\in \cU} \cH(\bar{\bm{\lambda}}_P, \bar{\bm{\lambda}}_E, \nu_P, \nu_E) \nonumber \\
        &\qquad~~ = \max_{\nu_P \in \cU} v_P \bar{\bm{\lambda}}_P^\transpose \nu_P + \min_{\nu_E\in \cU} v_E \bar{\bm{\lambda}}_E^\transpose \nu_E,
    \end{align}
    for all $t \in [0,t_c]$. In view of the Cauchy-Schwartz inequality, it follows readily from \eqref{eq:PMP} that 
    \begin{equation}
    \uu_P^*(t) \equiv \dfrac{\bar{\bm{\lambda}}_P}{\|\bar{\bm{\lambda}}_P\|},~~~~~
    \uu_E^*(t)\equiv -\dfrac{\bar{\bm{\lambda}}_E}{\|\bar{\bm{\lambda}}_E\|}.
    \end{equation}
   This completes the proof.
\end{proof}

\begin{remark}
    Proposition~\ref{prop:hamiltonian} implies that, under optimal play, the ratio of the travel distance of the pursuer to that of the evader is constant, i.e., $\| \x_E(t) - \x_{E,0} \| = \gamma \| \x_P(t) - \x_{P,0} \|$, for all $t \in [0,t_f]$.
\end{remark}

Since the slower evader case is considered herein, the Apollonius circle will be used frequently in the subsequent analysis, whose definition is provided next.

\begin{definition}[Apollonius Circle] \label{def:apolloncirc}
    Given the TPTDG defined in Section~\ref{sec:problem_formulation}, the Apollonius circle between the two positions $\x_P = (x_P,y_P) \in \bR^2$ and $\x_E = (x_E,y_E) \in \bR^2$ is the set of points at which both players will simultaneously arrive under optimaly play (i.e., with constant control inputs), that is,
    \begin{align} \label{eq:apolcircle}
        &\cA(\mbx;\gamma) := \left\{ \z \in \bR^2 : \| \z - \bm{C}(\mbx;\gamma) \| = R(\mbx;\gamma) \right\}.
    \end{align}
    The center, $\bm{C}(\mbx;\gamma)$, and the radius, $R(\mbx;\gamma)$, of the Apollonius circle are respectively given by
    \begin{align}
        &\bm{C}(\mbx;\gamma) = \left( \frac{1}{1-\gamma^2} (x_E-\gamma^2 x_P), \frac{1}{1-\gamma^2} (y_E-\gamma^2 y_P) \right), \label{eq:apolcircle_center}
        \\
        &R(\mbx;\gamma) = \frac{\gamma}{1-\gamma^2} \sqrt{(x_E-x_P)^2 + (y_E-y_P)^2}. \label{eq:apolcircle_radius}
    \end{align}
\end{definition}

For notational brevity, $\gamma$ in the arguments of \eqref{eq:apolcircle_center} and \eqref{eq:apolcircle_radius} will be omitted throughout the paper. The Apollonius circle provides a number of useful geometric properties, which are summarized below.

\begin{fact}\label{fact:fact1}
    For the game defined in Section~\ref{sec:problem_formulation}, optimal play implies that $\| \bm{C}(\mbx) - \x_E \| = \gamma R(\mbx)$ and $\| \bm{C}(\mbx) - \x_P \| = R(\mbx) / \gamma$, and thus $\| \bm{C}(\mbx) - \x_E \| = \gamma^2 \| \bm{C}(\mbx) - \x_P \|$. In the capture game, let $\x^\star \in \bR^2$ be the optimal capture point, then it holds under optimal play that $\| \x^\star - \x_{E,0} \| = \gamma \| \x^\star - \x_{P,0} \|$ and that $\x^\star$ lies on the Apollonius circle, i.e., $\| \x^\star - \bm{C}(\mbx) \| = R(\mbx)$. In the attack game, let $\x^\dagger \in \partial \Omega$ be the optimal attack point, then, again under optimal play, it follows that $\| \x^\dagger - \x_{E,0} \| = \gamma \| \x_P(t_a) - \x_{P,0} \|$.
\end{fact}

As mentioned, the Apollonius circle is the borderline that separates the dominant regions of the pursuer and the evader. The pursuer's dominant region, denoted by $\cD_P$, contains all the points that the pursuer can reach faster than the evader, whereas the evader's dominant region, $\cD_E$, includes all the points that the evader can reach at least as fast as the pursuer. Note that the circle $\cA$ itself is included in $\cD_E$ given the definitions of the terminal manifolds $\cT_c$ and $\cT_a$ given in Section~\ref{sec:problem_formulation}. Using the notion of these two dominant regions, the winning condition of each player is provided as follows.

\begin{proposition} \label{prop:dominantregions}
Consider the TPTDG defined in Section~\ref{sec:problem_formulation} and the dominant regions of the players, $\cD_P(\mbx,\gamma)$ and $\cD_E(\mbx,\gamma)$, whose expressions are
    \begin{align}
        &\cD_P(\mbx;\gamma) := \{ \z \in \bR^2 : \| \z - \bm{C}(\mbx) \| > R(\mbx) \}, \label{eq:dominantregion_pursuer}
        \\
        &\cD_E(\mbx;\gamma) := \{ \z \in \bR^2 : \| \z - \bm{C}(\mbx) \| \leq R(\mbx) \}. \label{eq:dominantregion_evader}
    \end{align}
    Then, under optimal play, the pursuer is guaranteed to win the game if $\cD_E \cap \Omega = \varnothing$, whereas the evader is guaranteed to win the game if $\cD_E \cap \Omega \neq \varnothing$.
\end{proposition}
\begin{proof}
    If $\cD_E \cap \Omega = \varnothing$, the pursuer is able to reach any point in $\Omega$ faster than the evader under optimal play, thus the first statement is proved. For the second statement, the fact that $\cD_E \cap \Omega \neq \varnothing$ implies that there exists a point in $\Omega$ that the evader can reach at least as fast as the pursuer under optimal play. It follows from the definition of $\cT_a$ that, if the evader reaches $\Omega$ faster than the pursuer, the evader wins. In the case when both players reach a point $\x \in \cD_E \cap \Omega$ at the same time (in view of Definition~\ref{def:apolloncirc}, the point $\x$ will belong to $\cA$), then by the definition of $\cT_c$, the joint state $\mbx$ will terminate at $\cT_a$, thus the evader wins.
\end{proof}

Now, given a point $\x \in \bR^2$, let us define the orthogonal projection of $\x$ onto $\Omega$ as $\proj_\Omega(\x) := \argmin\nolimits_{\z \in \Omega} \| \z - \x \|$. In TPTDGs, the existence and uniqueness of $\proj_\Omega (\x)$ are ensured for all $\x$ since $\Omega$ is a compact and convex set. In the following proposition, which is a generalized statement of Theorem 2 in \cite{garcia2019optimal}, the barrier surface that demarcates different winning sets is characterized.

\begin{proposition} \label{prop:barriersurface}
    Consider the TPTDG defined in Section~\ref{sec:problem_formulation}. The barrier surface, which divides the state space of the game $\bR^4$ into two winning sets $\cR_c$ (capture set) and $\cR_a$ (attack set), is given by $\cB = \{ \mbz \in \bR^4 : B(\mbz;\Omega,\gamma) = 0 \}$, where the barrier function $B:\bR^4 \rightarrow \bR$ is defined as
    \begin{equation} \label{eq:barrierfunction}
        B(\mbx;\Omega,\gamma) := \| \proj_\Omega \left( \bm{C}(\mbx) \right) - \bm{C}(\mbx) \| - R(\mbx).
    \end{equation}
    The winning sets $\cR_c$ and $\cR_a$ are expressed as
    \begin{align}
        &\cR_c := \{ \mbz \in \bR^4 : B(\mbz;\Omega,\gamma) > 0 \}, \label{eq:capturespace}
        \\
        &\cR_a := \{ \mbz \in \bR^4 : B(\mbz;\Omega,\gamma) \leq 0 \}. \label{eq:attackspace}
    \end{align}
    Then, if $\mbx \in \cR_c$, the pursuer is guaranteed to win the game, whereas if $\mbx \in \cR_a$, the evader is guaranteed to win the game, both under optimal play.
\end{proposition}
\begin{proof}
    Proposition~\ref{prop:dominantregions} implies $\proj_\Omega(\bm{C}(\mbx)) \in \cD_E$ if $\| \proj_\Omega (\bm{C}(\mbx)) - \bm{C}(\mbx) \| \leq R(\mbx)$ and $\proj_\Omega \in \cD_P$ if $\| \proj_\Omega (\bm{C}(\mbx)) - \bm{C}(\mbx) \| > R(\mbx)$. Since $\proj_\Omega (\bm{C}(\mbx)) \in \Omega$, then the inclusion $\proj_\Omega (\bm{C}(\mbx)) \in \cD_E$ implies $\cD_E \cap \Omega \neq \varnothing$, whereas the exclusion $\proj_\Omega (\bm{C}(\mbx)) \notin \cD_E$ implies $\cD_E \cap \Omega = \varnothing$. Hence, if $B(\mbx;\Omega,\gamma) > 0$, that is, $\| \proj_\Omega (\bm{C}(\mbx)) - \bm{C}(\mbx) \| > R(\mbx)$, it follows that $\proj_\Omega (\bm{C}(\mbx)) \in \cD_E$ and $\cD_E \cap \Omega \neq \varnothing$, and thus, in view of Proposition~\ref{prop:dominantregions}, the pursuer is guaranteed to win. Conversely, if $B(\mbx;\Omega,\gamma) \leq 0$, that is, $\| \proj_\Omega (\bm{C}(\mbx)) - \bm{C}(\mbx) \| \leq R(\mbx)$, it follows that $\proj_\Omega (\bm{C}(\mbx)) \in \cD_E$ and $\cD_E \cap \Omega = \varnothing$, and, consequently, the same proposition assures that the evader will win under optimal play.
\end{proof}

\begin{remark} \label{remark:semipermeable_barrier}
    The barrier surface $\cB$ is semipermeable, meaning that the state of the game $\mbx$ never crosses $\cB$ under optimal play, but if either of the players deviates from his/her own optimal path, $\mbx$ could cross $\cB$.
\end{remark}

If the pursuer's position $\x_P$ is known and fixed, the cross section of $\cB$, referred to as barrier curve and denoted by $\overline{\cB}$, can be visualized on the plane $\bR^2$. Simulation results that highlight Remark~\ref{remark:semipermeable_barrier} are illustrated in Section~\ref{sec:simulations}.

\section{The Games of Degree} \label{sec:games_of_degree}

Having found the solution of the game of kind, or the barrier function, we now solve two different games of degree: the capture game of degree and the attack game of degree. As discussed in Section~\ref{sec:problem_formulation}, the value functions of the TPTDG are defined differently in $\cR_c$ and $\cR_a$, thus two different sets of optimal state feedback strategies are to be found. Since the optimal strategies in open-loop form have been already found in Proposition~\ref{prop:hamiltonian}, similar to the solution procedure provided in \cite{garcia2019optimal}, we first formulate candidate state feedback strategies derived from those open-loop solutions, then we prove that these candidate solutions along with their corresponding value function satisfy the HJI equation.

\subsection{Capture Game of Degree}

In this subsection, the value function and the optimal strategies of the two players in the capture set $\cR_c$ are provided.

\begin{proposition} \label{prop:capgamedegree}
    For the TPTDG defined in Section~\ref{sec:problem_formulation}, the value function defined over $\cR_c$ is given by
    \begin{equation}
        V_{c}(\mbx) = \| \proj_\Omega (\bm{C}(\mbx)) - \bm{C}(\mbx) \| - R(\mbx),
    \end{equation}
    where $V_{c}(\mbx)$ is continuously differentiable for all $\mbx \in \cR_c$. The optimal state feedback strategies of the pursuer and the evader in $\cR_c$, denoted by $\uu_P^\star = (u_{P,x}^ \star,u_{P,y}^ \star) \in \cU$ and $\uu_E^\star = (u_{E,x}^ \star,u_{E,y}^ \star) \in \cU$, respectively, are defined as
    \begin{equation}
        \uu_P^\star =
        \dfrac{\x^\star - \x_P}{\| \x^\star - \x_P \|}, \quad \uu_E^\star =
        \dfrac{\x^\star - \x_E}{\| \x^\star - \x_E \|}, \label{eq:capturegame_optinputs}
    \end{equation}
    where the optimal capture point $\x^\star = (x^\star, y^\star) \in \bR^2$ is
    \begin{align}
        \x^\star &= \bigg( C_x + R(\mbx) \cdot \frac{P_x - C_x}{\sqrt{ (P_x - C_x)^2 + (P_y - C_y)^2 }}, \nonumber
        \\
        &\qquad C_y + R(\mbx) \cdot \frac{P_y - C_y}{ \sqrt{ (P_x - C_x)^2 + (P_y - C_y)^2 } } \bigg), \label{eq:capturegame_optcappoint}
    \end{align}
    with $\bm{C}(\mbx) = (C_x,C_y)$ and $\proj_\Omega (\bm{C}(\mbx)) = (P_x,P_y)$.
\end{proposition}
\begin{proof}
    We first show that the value function $V_{c}$ is continuously differentiable for all $\mbx \in \cR_c$. From \eqref{eq:valfun1} we obtain $V_{c} = \min_{\uu_E^\star(\cdot)}\max_{\uu_P^\star(\cdot)} J_c = \dist(\x^\star,\Omega)$. If $\mbx \in \cR_c$, it follows from Proposition~\ref{prop:dominantregions} and Fact~\ref{fact:fact1} that $\x^\star \in \cA$ and $\proj_\Omega (\bm{C}(\mbx)) \notin \cD_E$, thus $\| \x^\star - \bm{C}(\mbx) \| = R(\mbx)$. The value function then becomes
    \begin{align}
        V_{c}(\mbx) &= \| \proj_\Omega (\bm{C}(\mbx)) - \bm{C}(\mbx) \| - R(\mbx). \label{eq:capturegame_valuefunction_first}
    \end{align}
    Note that $\| \proj_\Omega (\bm{C}(\mbx)) - \bm{C}(\mbx) \| = \min_{\z \in \Omega} \| \z - \bm{C}(\mbx) \|$. To simplify differentiation, consider the following two angles: $\phi = \mathrm{atan2}(y_E - y_P,x_E - x_P) = \mathrm{atan2}(C_y - y_P,C_x - x_P)$ and $\theta = \mathrm{atan2}(P_y - C_y,P_x - C_x)$, where $\phi$ is the elevation angle of $\x_E$ (or equivalently $\bm{C}(\mbx)$) with respect to $\x_P$, and $\theta$ is the elevation angle of $\x^\star$ with respect to $\bm{C}(\mbx)$. Additionally, the partial derivatives of $\bm{C}(\mbx)$ and $R(\mbx)$ with respect to $\mbx$ are obtained as
    \begin{align}
        \frac{\partial \bm{C}(\mbx)}{\partial \mbx} &= \frac{1}{1-\gamma^2}
        \begin{bmatrix}
            -\gamma^2 & 0 & 1 & 0
            \\
            0 & -\gamma^2 & 0 & 1
        \end{bmatrix}, \label{eq:capturegame_Cpartial}
        \\
        \nabla_{\mbx} R(\mbx) &= \frac{\gamma}{1-\gamma^2} \big[ -\cos\phi ~ -\sin\phi ~ \cos\phi ~ \sin\phi \big]^\transpose. \label{eq:capturegame_Rpartial}
    \end{align}
    The gradient of $V_c$ in $\cR_c$ is then found as
    \begin{align}
        &\nabla_{\mbx} V_{c} = \nabla_{\mbx} \left( \| \proj_\Omega (\bm{C}(\mbx)) - \bm{C}(\mbx) \| - R(\mbx) \right) \nonumber
        \\
        &= \left[ \nabla_{\bm{C}}^\transpose (\| \proj_\Omega (\bm{C}(\mbx)) - \bm{C}(\mbx) \|) \frac{\partial \bm{C}(\mbx)}{\partial \mbx} \right]^\transpose - \nabla_\mbx R(\mbx) \nonumber
        \\
        &= - \left[ \frac{\partial \bm{C}(\mbx)}{\partial \mbx} \right]^\transpose \frac{\proj_\Omega (\bm{C}(\mbx)) - \bm{C}(\mbx)}{\| \proj_\Omega (\bm{C}(\mbx)) - \bm{C}(\mbx) \|} - \nabla_\mbx R(\mbx). \label{eq:capturegame_gradient_before}
    \end{align}
    Substituting $(\proj_\Omega (\bm{C}(\mbx)) - \bm{C}(\mbx))/\| \proj_\Omega (\bm{C}(\mbx)) - \bm{C}(\mbx) \| = (\cos\theta,\sin\theta)$ into \eqref{eq:capturegame_gradient_before} along with the partial derivatives of $\bm{C}(\mbx)$ in \eqref{eq:capturegame_Cpartial} and $R(\mbx)$ in \eqref{eq:capturegame_Rpartial} yields
    \begin{align}
        \nabla_\mbx V_{c} = \frac{1}{1-\gamma^2}
        \begin{bmatrix}
            \gamma^2 \cos\theta + \gamma \cos\phi
            \\
            \gamma^2 \sin\theta + \gamma \sin\phi
            \\
            -\cos\theta - \gamma \cos\phi
            \\
            -\sin\theta - \gamma \sin\phi
        \end{bmatrix}.
    \end{align}
    Since $0 < \gamma < 1$ and no $\mbx$ such that $\| \x_P - \x_E\| = 0$ belongs to $\cR_c$, $\phi$ and $\theta$ are always determinate, and thus $V_{c}$ is continuously differentiable for all $\mbx \in \cR_c$.
    
    Next, the optimal capture point $\x^\star$ are rewritten as $\x^\star = ( C_x + R(\mbx) \cos\theta, C_y + R(\mbx) \sin\theta )$. It follows from Fact~\ref{fact:fact1} that $\| \x^\star - \x_E \| = \gamma \| \x^\star - \x_P \|$ and $\| \bm{C}(\mbx) - \x_E \| = \gamma^2 \| \bm{C}(\mbx) - \x_P \| = \gamma R(\mbx)$. Using these properties, the optimal strategies in \eqref{eq:capturegame_optinputs} are rewritten as
    \begin{align}
        \uu_P^\star &= \left( R(\mbx) \frac{ \cos\phi + \gamma\cos\theta }{\gamma \| \x^\star - \x_P \|}, R(\mbx) \frac{ \sin\phi + \gamma\sin\theta}{\gamma \| \x^\star - \x_P \|} \right),
    \end{align}
    \begin{align}
        \uu_E^\star &= \left( R(\mbx) \frac{ \gamma\cos\phi + \cos\theta }{ \| \x^\star - \x_P \|}, R(\mbx) \frac{ \gamma\sin\phi + \sin\theta }{ \| \x^\star - \x_P \|} \right).
    \end{align}
    Now, the HJI equation adjusted for our problem is given by
    \begin{equation}\label{eq:HJIpf}
        -\frac{\partial V_{c}}{\partial t} = \nabla_\mbx V_{c} \cdot \bm{f}(\mbx,\uu_P^\star,\uu_E^\star).
    \end{equation}
    The value function $V_{c}$ does not depend on time explicitly and thus $\partial V_{c}/\partial t = 0$. Thus, \eqref{eq:HJIpf} takes the following form:
    \begin{align}
        &\nabla_\mbx V_{c} \cdot  \bm{f}(\mbx,\uu_P^\star,\uu_E^\star) = \nabla_\mbx V_{c} \cdot v_P (\uu_P^{\star}, \gamma \uu_E^{\star}) \nonumber
        \\
        &\qquad = \frac{v_P R(\mbx)}{1-\gamma^2}
        \begin{bmatrix}
            \gamma^2 \cos\theta + \gamma \cos\phi
            \\
            \gamma^2 \sin\theta + \gamma \sin\phi
            \\
            -\cos\theta - \gamma \cos\phi
            \\
            -\sin\theta - \gamma \sin\phi
        \end{bmatrix}
        \cdot
        \begin{bmatrix}
            \frac{  \cos\phi + \gamma\cos\theta}{\gamma\| \x^\star - \x_P \|}
            \\
            \frac{  \sin\phi + \gamma\sin\theta }{\gamma\| \x^\star - \x_P \|}
            \\
            \frac{  \gamma\cos\phi + \cos\theta }{\| \x^\star - \x_P \|}
            \\
            \frac{  \gamma\sin\phi + \sin\theta }{\| \x^\star - \x_P \|}
        \end{bmatrix} \nonumber
        \\
        &\qquad = 0.
    \end{align}
    This completes the proof.
\end{proof}

\begin{remark}
    Since $V_{c}(\mbx)$ is continuously differentiable for all $\mbx \in \cR_c$, there exists no dispersal surface in the capture set $\cR_c$. In other words, the optimal strategies for the pursuer and evader in $\cR_c$ are always unique.
    \label{remark:capgame_semiperm}
\end{remark}

\subsection{Attack Game of Degree}

In this subsection, a proposition similar to Proposition~\ref{prop:capgamedegree} is provided for the attack game of degree.

\begin{proposition} \label{prop:attgame_optinputs}
    For the TPTDG defined in Section~\ref{sec:problem_formulation}, the value function defined over $\cR_a$ is given by
    \begin{equation} \label{eq:prop6valuefunc}
        V_a(\mbx) = \| \x^\dagger - \x_P \| - \frac{1}{\gamma} \| \x^\dagger - \x_E \|,
    \end{equation}
    where $\x^\dagger = (x^\dagger,y^\dagger) \in \bR^2$ is the optimal attack point which corresponds to the unique global minimizer of the following constrained convex optimization problem:
    \begin{equation}\label{eq:attackgame_optattpoint}
    \begin{aligned}
       \textrm{minimize}_{\z \in \bR^2} &\quad -\| \z - \x_P \| + \frac{1}{\gamma} \| \z - \x_E \|,
        \\
        \textrm{subject to} &\quad h(\z) = 0,
        \\
        \textrm{and} &\quad \| \z - \x_E \| \leq \gamma \| \z - \x_P \|.
    \end{aligned}
   \end{equation}
   Then, $V_a(\mbx)$ is continuously differentiable for all $\mbx \in \cR_a$, and the optimal state feedback strategies of the pursuer and the evader in $\cR_a$ are respectively defined as
    \begin{equation}
        \uu_P^\star =
        \dfrac{\x^\dagger - \x_P}{\| \x^\dagger - \x_P \|}, \quad \uu_E^\star =
        \dfrac{\x^\dagger - \x_E}{\| \x^\dagger - \x_E \|},
        \label{eq:attackgame_optinputs}
    \end{equation}
    where $\uu_P^\star = (u_{P,x}^ \star,u_{P,y}^ \star) \in \cU$ and $\uu_E^\star = (u_{E,x}^ \star,u_{E,y}^ \star) \in \cU$.
\end{proposition}

\begin{proof}
    From \eqref{eq:valfun2}, $V_a(\mbx) = \min_{\uu_E^\star(\cdot)}\max_{\uu_P^\star(\cdot)} J_e = \dist(\x^\dagger,\x_P(t_a))$. Since the optimal trajectories of the pursuer and evader are straight lines, the distance between the two players at the final time is $\|\x^\dagger - \x_P\| - \frac{1}{\gamma} \| \x^\dagger - \x_E \|$, which leads to the expression in \eqref{eq:prop6valuefunc}. Next, to simplify the differentiation, let us define the following two angles: $\psi = \mathrm{atan2}((y^\dagger - y_P), (x^\dagger - x_P))$ and $\varphi = \mathrm{atan2}((y^\dagger - y_E), (x^\dagger - x_E))$, where $\psi$ (resp., $\varphi$) is the elevation angle of $\x^\dagger$ with respect to $\x_P$ (resp., $\x_E$). The gradient of $V_a$ in $\cR_a$ is then given by
    \begin{align}
        \nabla_{\mbx}& V_a = \nabla_{\mbx} \left( \| \x^\dagger - \x_P \| - \frac{1}{\gamma}\| \x^\dagger - \x_E \| \right) \nonumber
        \\
        &=
        -\begin{bmatrix}
            \dfrac{x^\dagger-x_P}{\| \x^\dagger - \x_P \|}
            \\
            \dfrac{y^\dagger-y_P}{\| \x^\dagger - \x_P \|}
            \\
            0
            \\
            0
        \end{bmatrix}
        + \frac{1}{\gamma}
        \begin{bmatrix}
            0
            \\
            0
            \\
            \dfrac{x^\dagger-x_E}{\| \x^\dagger - \x_E \|}
            \\
            \dfrac{y^\dagger-y_E}{\| \x^\dagger - \x_E \|}
        \end{bmatrix} \nonumber
        \\
        &=
        \begin{bmatrix}
            -\cos\psi & -\sin\psi & (\cos\varphi)/\gamma & (\sin\varphi)/\gamma
        \end{bmatrix}^\transpose,
    \end{align}
    where $\psi$ and $\varphi$ are always determinate since $\| \x_E - \x_P \| \neq 0$ for all $t \in [0,t_a]$ (otherwise the game would terminate). The optimal strategies can be rewritten using $\psi$ and $\varphi$ as $\uu_P^\star = (\cos\psi,\sin\psi), \quad \uu_E^\star = (\cos\varphi,\sin\varphi)$. Then, again, $\partial V_a / \partial t = 0$, and the HJI equation is solved as
    \begin{align}
        \nabla_\mbx V_a \cdot  \bm{f}&(\mbx,\uu_P^\star,\uu_E^\star) = \nabla_\mbx V_a \cdot v_P(\uu_P^{\star}, \gamma \uu_E^{\star}) \nonumber
        \\
        &= \frac{v_P}{\gamma}
        \begin{bmatrix}
            -\gamma \cos\psi
            \\
            -\gamma \sin\psi
            \\
            \cos\varphi
            \\
            \sin\varphi
        \end{bmatrix}
        \cdot
        \begin{bmatrix}
            (x^\star - x_P)/\| \x^\star - \x_P \|
            \\
            (y^\star - y_P)/\| \x^\star - \x_P \|
            \\
            \gamma (x^\star - x_E)/\| \x^\star - \x_E \|
            \\
            \gamma (y^\star - y_E)/\| \x^\star - \x_E \|
        \end{bmatrix} \nonumber
        \\
        &= \frac{v_P}{\gamma}
        \begin{bmatrix}
            -\gamma\cos\psi
            \\
            -\gamma\sin\psi
            \\
            \cos\varphi
            \\
            \sin\varphi
        \end{bmatrix}
        \cdot
        \begin{bmatrix}
            \cos\psi
            \\
            \sin\psi
            \\
            \gamma\cos\varphi
            \\
            \gamma\sin\varphi
        \end{bmatrix} = 0.
    \end{align}
    This completes the proof.
\end{proof}

\begin{remark}
    Similar to Remark~\ref{remark:capgame_semiperm}, since $V_a$ is continuously differentiable for all $\mbx \in \cR_a$, there exists no dispersal surface in $\cR_a$, and thus the optimal strategies of the pursuer and evader are always unique therein.
    \label{remark:attgame_semiperm}
\end{remark}

\section{Numerical Simulations} \label{sec:simulations}

In this section, we present numerical simulations that illustrate the main results of this paper. Two different scenarios are considered to visualize the subgames discussed in Section~\ref{sec:games_of_degree}. In the first scenario, both the pursuer and the evader play their optimal strategies provided in Section~\ref{sec:games_of_degree}, whereas in the second scenario one of the players plays a nonoptimal strategy. The common parameters are selected as $\x_{P,0} = (1,-0.5),~\x_E = (3,1),~v_P = 1$, and $v_E = 0.7$ such that $\gamma = 0.7$. The implicit function for the target boundary $\partial \Omega$, which is illustrated as a black curve in Figures~\ref{fig:capgame} and \ref{fig:attgame}, is chosen to be $h(x,y) = x^4 + x^2y^2 - 2xy^2 - 2y^3 + y^4$. The barrier curve at $t_0$ (or $t=0$), $\overline{\cB}_0$, is numerically computed and illustrated in the same figures as a red curve.

From Proposition~\ref{prop:barriersurface}, we know that, if the joint state $\mbx \in \cR_c$ (resp, $\mbx \in \cR_a$), the pursuer (resp., the evader) is guaranteed to win the game under optimal play, which is illustrated in Figure~\ref{fig:capgame_opt} (resp., Figure~\ref{fig:attgame_opt}). In Figure~\ref{fig:capgame_opt}, the pursuer and the evader employ their optimal strategy given in \eqref{eq:capturegame_optinputs}, which results in linear trajectories for both players. In the same figure, the optimal capture point at $t_0$, $\x^\star(t_0)$, is given by \eqref{eq:capturegame_optcappoint} and is indicated by the blue square, whereas the projection of the center of the Apollonius circle between $\x_{P,0}$ and $\x_{E,0}$ onto $\partial \Omega$, $\proj_\Omega(\bm{C}(\mbx_0))$, is indicated by the orange square. In Figure~\ref{fig:attgame_opt}, the players employ their optimal strategies given in \eqref{eq:attackgame_optinputs}, again resulting in linear trajectories for both players. The optimal attack point at $t_0$, where the evader actually arrives at the final time $t_f$, is found by solving the constrained optimization problem \eqref{eq:attackgame_optattpoint} and is indicated by the red square. The pursuer ends up at $\x_P(t_f)$ which is the closest point to $\x^\dagger(t_0)$ that he can reach.

\begin{figure}
    \centering
    \subfigure[Optimal play]{\includegraphics[width=4.25cm]{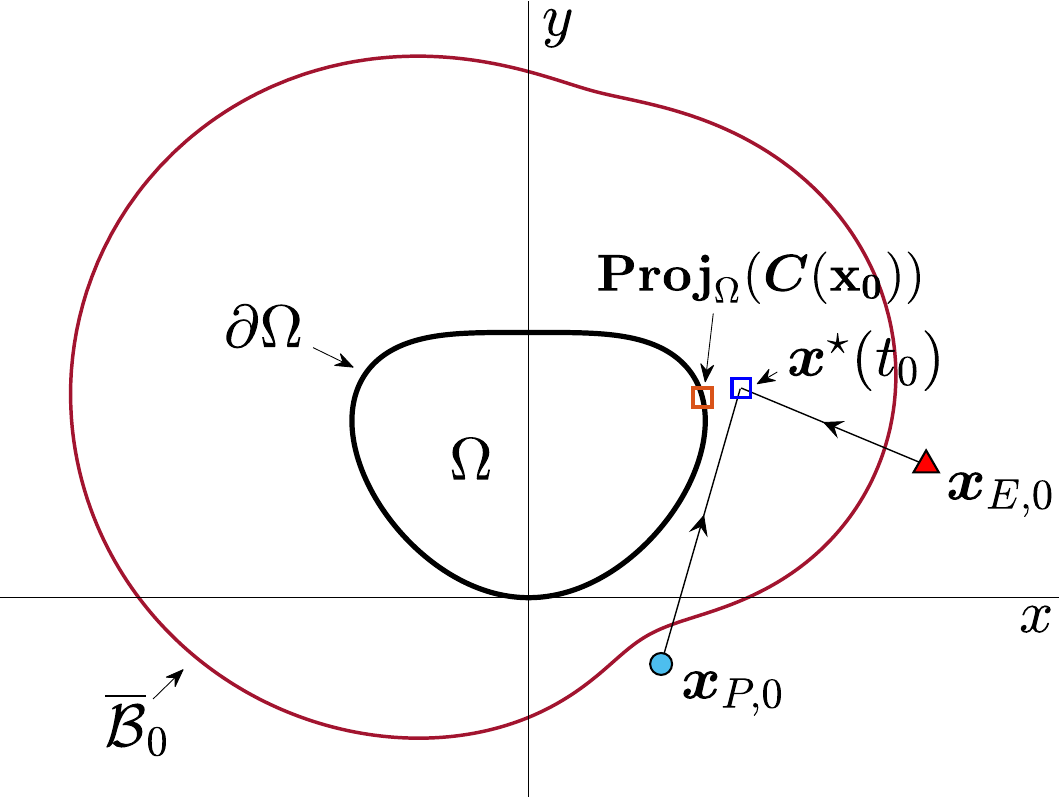}\label{fig:capgame_opt}}
    \subfigure[Nonoptimal play by pursuer]{\includegraphics[width=4.25cm]{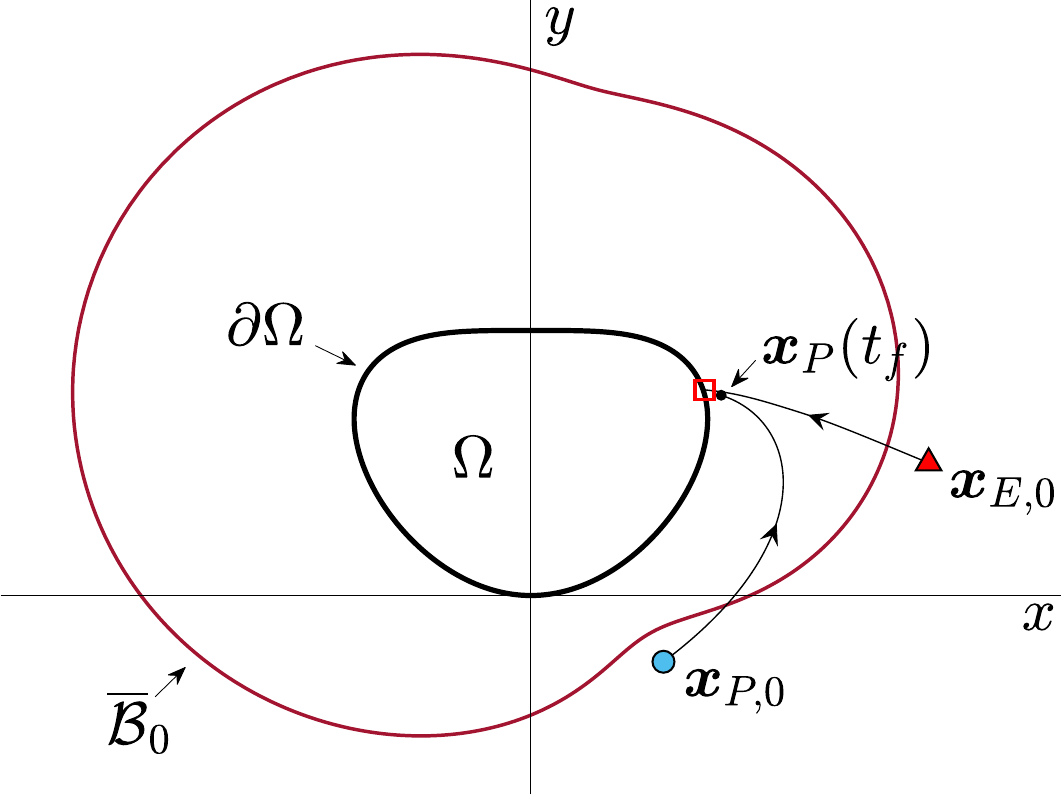}\label{fig:capgame_nopt}}
    \caption{Two different outcomes of TPTDGs with $\mbx_0 \in \cR_c$}
    \label{fig:capgame}
\end{figure}

As addressed in Remark~\ref{remark:semipermeable_barrier}, the barrier surface $\cB$ (or the barrier curve $\overline{\cB}$) is semipermeable and may be crossed under nonoptimal play. This is illustrated in Figures~\ref{fig:capgame_nopt} and \ref{fig:attgame_nopt}. In Figure~\ref{fig:capgame_nopt}, the evader employs her optimal evasion strategy given in \eqref{eq:capturegame_optinputs}, whereas the pursuer employs a pure pursuit strategy. The joint state $\mbx$ crosses $\cB$ at some time $t_s \in [t_0,t_f]$ (i.e., $B(\mbx(t_s))=0$), and the game transitions from the capture game to the attack game. Consequently, the evader successfully reaches $\partial \Omega$ before captured by the pursuer. Similarly, in Figure~\ref{fig:attgame_nopt}, the pursuer employs his optimal strategy in \eqref{eq:attackgame_optinputs}, whereas the evader moves along a (nonoptimal) linear path toward an arbitrary point on $\partial \Omega$. The game then retreats from the attack game to the capture game at some time $t_s \in [t_0,t_f]$ (again, $B(\mbx(t_s))=0$), and the evader is eventually captured by the pursuer before reaching $\partial \Omega$. Discussion on such $t_s$ are omitted for brevity.

\section{Conclusions} \label{sec:concl}

In this paper, a two-player game of guarding a compact convex target set with smooth boundary by a single pursuer is addressed based on the combination of geometric and differential game theoretic methods. The main contributions of this work include the characterization of the generalized barrier function (for a game of kind) as well as generalized value functions and optimal state feedback strategies for both players (for games of degree). The proposed strategies are respectively shown to be the saddle point of the capture game and the attack game via the Hamilton-Jacobi-Isaacs equation. In our future work, we will leverage the solution approach proposed herein to study the characteristics of barrier surfaces and optimal strategies in high-dimensional multiplayer target-defense games.

\begin{figure}
    \centering
    \subfigure[Optimal play]{\includegraphics[width=4.25cm]{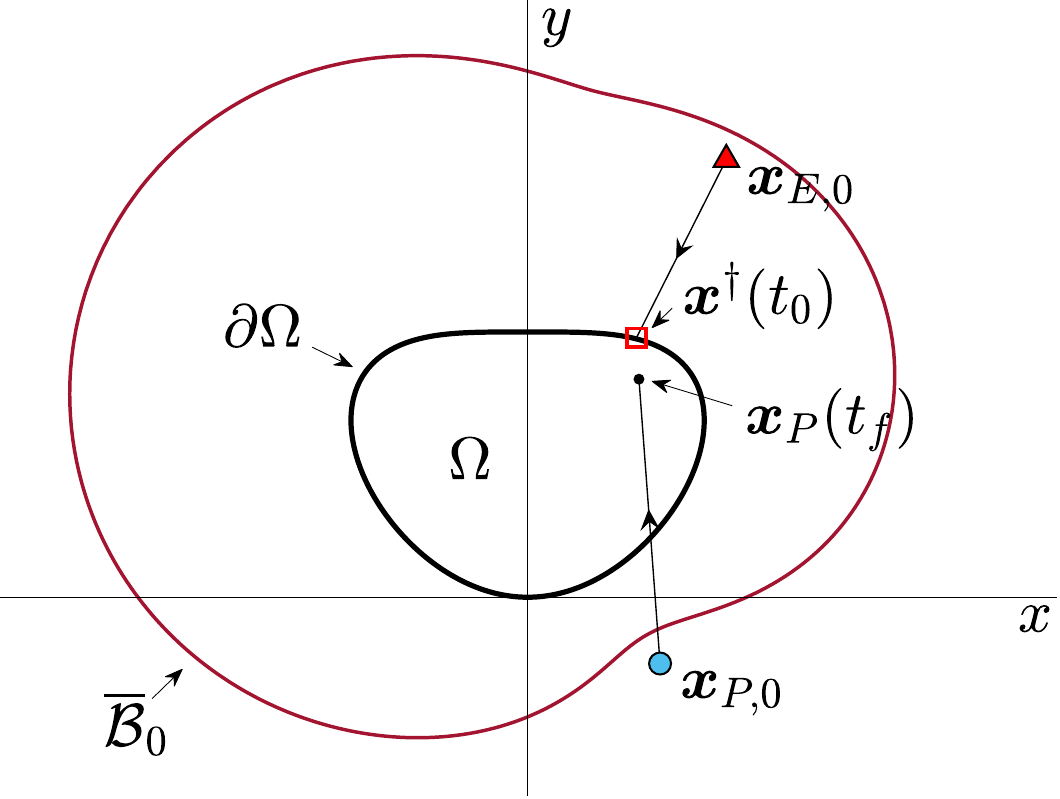}\label{fig:attgame_opt}}
    \subfigure[Nonoptimal play by evader]{\includegraphics[width=4.25cm]{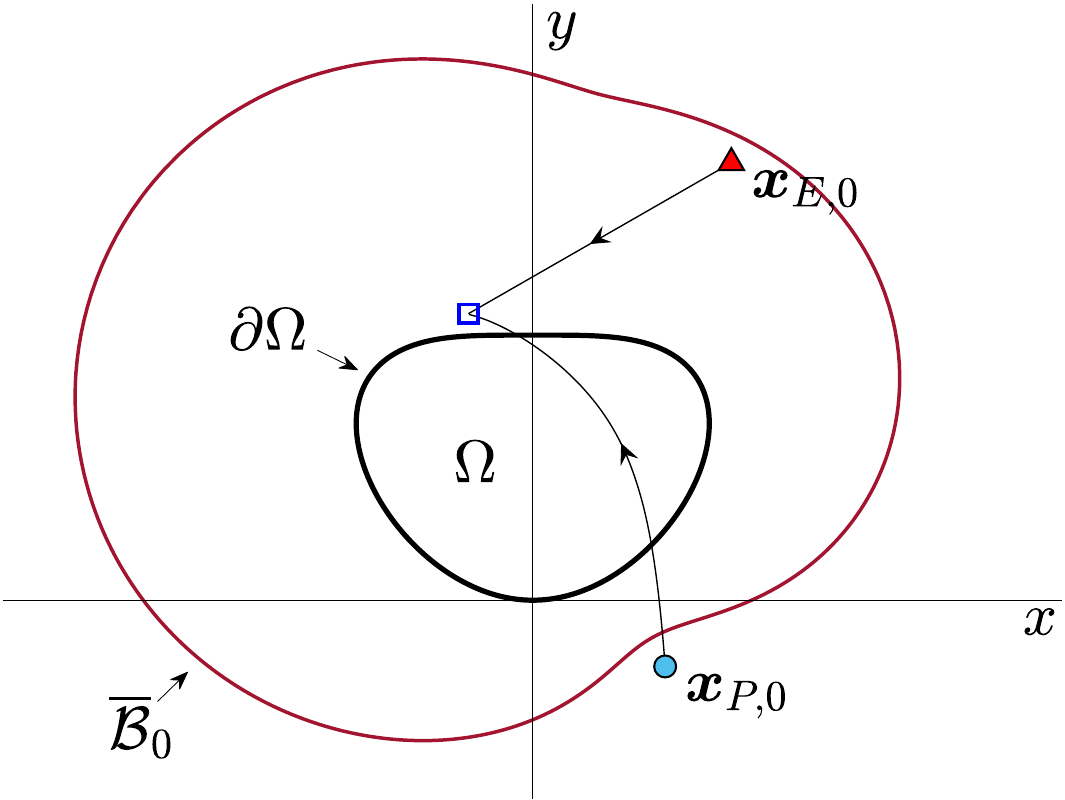}\label{fig:attgame_nopt}}
    \caption{Two different outcomes of TPTDGs with $\mbx_0 \in \cR_a$}
    \label{fig:attgame}
\end{figure}

\bibliographystyle{ieeetr}
\bibliography{pegref}

\end{document}